\providecommand{\U}[1]{\protect\rule{.1in}{.1in}}
\newtheorem{theorem}{Theorem}
\newtheorem{definition}{Definition}
\newtheorem{example}{Example}
\numberwithin{equation}{section}
\definecolor{MyDarkBlue}{rgb}{0,0.08,0.45}
\definecolor{cites}{HTML}{324b13}
\definecolor{links}{HTML}{1a663b}
\definecolor{MyLightMagenta}{cmyk}{0.1,0.8,0,0.1}
\begin{document}

\title{Concave many-to-one matching}
\author{Chao Huang\thanks{Institute for Social and Economic Research, Nanjing Audit University. Email: huangchao916@163.com.}}
\date{}
\maketitle

\begin{abstract}
We propose a notion of concavity in two-sided many-to-one matching, which is an analogue to the balancedness condition in cooperative games. A stable matching exists when the market is concave. We provide a class of concave markets. In the proof of the existence theorem, we use Scarf's algorithm to find a stable schedule matching, which is of independent interest.
\end{abstract}

\textit{Keywords}: two-sided matching; stability; many-to-one matching; Scarf's lemma; concavity; schedule matching; complementarity;

\textit{JEL classification}: C62, D47, D51

\section{Introduction}\label{Sec_intro}

This paper studies the existence of a stable matching in many-to-one matching with nontransferable utilities. We propose a concavity condition using schedule matchings. In a schedule matching, each worker schedules her time among her acceptable firms, and each firm schedules its time among its acceptable groups of workers. To distinguish from schedule matchings, we sometimes call a matching a full-time matching. We show that a stable full-time matching exists if the market satisfies the condition
\begin{equation}\label{balanced}
\begin{aligned}
&\qquad\text{for any schedule matching, \textbf{if we consider each worker}}\\
&\text{\textbf{to be slightly better off when her employer becomes better}}\\
&\text{\textbf{off},\, there is a full-time matching\, in which\, each agent is not}\\
&\text{worse off than in its worst situation in the schedule matching.}
\end{aligned}
\end{equation}
We call a market satisfying this condition concave.\footnote{This condition may be called balancedness as it is the counterpart to the balancedness condition of \cite{S67}. We use the current name because (i) we want to avoid confusion with the balanced firm-worker hypergraph studied by \cite{H22}; (ii) the term ``concave'' seems to be more informative than ``balanced'' in summarizing condition (\ref{balanced}); (iii) if each agent has a transferable utility, the Bondareva-Shapley balancedness condition (see \citealp{M98}) is equivalent to the concavity of the aggregate valuation (see, for example, Lemma 2 of \citealp{TY19}), and both conditions characterize the existence of a stable matching. Our condition can also be viewed as the counterpart to the concavity of the aggregate valuation in matching with transferable utilities.} The artificial externalities stated in the sentence in bold are proposed by \cite{K10}. Condition (\ref{balanced}) without the sentence in bold is an analogue to the balancedness condition of \cite{S67} for a nonempty core. We use an example to illustrate the above statements. Consider a market with two firms $f_1,f_2$, two workers $w_1,w_2$, and the following preferences.
\begin{equation}\label{exam_in1}
\begin{aligned}
&f_1: \{w_1,w_2\}\succ\{w_2\}\succ\emptyset \qquad\qquad\qquad\qquad &w_1: &\quad f_1\succ f_2\\
&f_2: \{w_1\}\succ\{w_2\}\succ\emptyset \qquad\qquad\qquad\qquad &w_2: &\quad f_2\succ f_1
\end{aligned}
\end{equation}
Recall that an individually rational matching is stable if no firm-set of workers coalition can block the matching such that the firm becomes better off and no worker is worse off. The market (\ref{exam_in1}), which is modified from an example of \cite{CKK19}, has no stable (full-time) matching. If firm $f_2$ hires worker $w_1$, and firm $f_1$ hires worker $w_2$, then worker $w_1$ will go to firm $f_1$; if firm $f_2$ hires worker $w_2$, and firm $f_1$ is unmatched, then firm $f_2$ will fire worker $w_2$ and hire worker $w_1$; if firm $f_2$ is unmatched, and firm $f_1$ hires both workers, then worker $w_2$ will go to firm $f_2$. By contrast, an outcome is in the core if no coalition can block the outcome such that each member of the coalition becomes better off. The matching $(f_1,w_2)(f_2,w_1)$ is in the core of this market. Notice that the coalition $\{f_1,w_1,w_2\}$ does not core-wise block this matching since worker $w_2$ does not become better off. In many-to-one matching, the set of stable matchings is a stronger notion than the core. The market (\ref{exam_in1}) has a nonempty core but has no stable matching since this market satisfies condition (\ref{balanced}) without the sentence in bold but fails the condition in complete form. For example, consider the full-time matching $(f_1,w_2), (f_2,w_1)$ and the schedule matching
\begin{equation}\label{intro_schedule}
\left(
             \begin{aligned}
             f_1\qquad\quad\\
             \frac{1}{2}\{w_1,w_2\}+\frac{1}{2}\emptyset
             \end{aligned}
\right)
\left(
             \begin{aligned}
             f_2\qquad\quad\\
             \frac{1}{2}\{w_1\}+\frac{1}{2}\{w_2\}
             \end{aligned}
\right)
\end{equation}
The schedule matching means that firm $f_1$ schedules its time over $\{w_1,w_2\}$ and $\emptyset$ equally, and firm $f_2$ schedules its time over $\{w_1\}$ and $\{w_2\}$ equally. Each agent is not worse off in the full-time matching than its worst situation in the schedule matching. For example, worker $w_2$'s worst situation in the schedule matching is working for firm $f_1$ with worker $w_1$, which is the same as her situation in the full-time matching: working for firm $f_1$ alone. But if we consider worker $w_2$ to be slightly better off when her employer becomes better off, since firm $f_1$ prefers $\{w_1,w_2\}$ to $\{w_2\}$, worker $w_2$ is worse off in the full-time matching than her worst situation in the schedule matching. The word ``slightly'' refers to that the increment in a worker's ``utility'' as her employer becomes better off is small such that the worker's preference over different employers is unchanged.

The existence of a stable matching under condition (\ref{balanced}) follows from  Scarf's lemma and an observation of \cite{K10}. The latter work showed that if we impose artificial externalities into workers' preferences such that each worker is slightly better off when her employer becomes better off, the set of stable matchings in the original market coincides with the core of the modified market. For example, in the market modified from (\ref{exam_in1}), the matching $(f_1,w_2)(f_2,w_1)$ is no longer in the core since $\{f_1,w_1,w_2\}$ is a blocking coalition in which each member becomes better off. Scarf's lemma (Theorem 2 of \citealp{S67}) implies that the modified market has a nonempty core under condition (\ref{balanced}) without the sentence in bold, and therefore a stable matching exists in the original market under condition (\ref{balanced}).

We can strengthen condition (\ref{balanced}) by introducing a notion of generalized schedule matching called $\pi$-schedule matching, which is an analogue to the $\pi$-balanced collection of coalitions proposed by \cite{B70}. A $\pi$-schedule matching generalizes a schedule matching by allowing different workers to have different labor supplies, and different firms to have different resource capcities. A worker can have different labor intensities when working with different employers and colleagues, and a firm can have different resource input intensities when hiring different sets of employees. A schedule matching is a $\pi$-schedule matching when the $\pi$-scheme specifies the same above parameters. We can generalize condition (\ref{balanced}) by replacing ``any schedule matching'' with ``any $\pi$-schedule matching for a $\pi$-scheme''. Therefore, a market may fails condition (\ref{balanced}) for this $\pi$-scheme but may satisfies the generalized condition for another $\pi$-scheme.

The market with a balanced firm-worker hypergraph studied by \cite{H22} is an instance of concave markets. In this paper, we provide another concave market. We assume workers are divided into a set of leaders and a set of followers. Each follower follows a leader. A set of workers is acceptable to a firm only if the set contains exactly one leader and includes only followers who follow this leader. We show that this market is concave, and we can find a stable matching by a variant Deferred Acceptance (henceforth, DA) algorithm.

We can find a stable matching in a general concave market using Scarf's algorithm. If a market is concave under a specific $\pi$ scheme, we can use Scarf's algorithm to produce a stable $\pi$-schedule matching with respect to this $\pi$ scheme, then the full-time matching for condition (\ref{balanced}) with respect to this stable $\pi$-schedule matching is stable. The computation of a stable $\pi$-schedule matching is of independent interest.

\subsection{Related literature\label{Sec_Lit}}

Studies on two-sided matching originated with the seminal work of \cite{GS62}. When there is no complementarity in agents' preferences, \cite{KC82}, \cite{RS90}, and \cite{HM05} showed that stable matchings exist in matching with transferable utilities, discrete matching, and matching with contracts, respectively. On the other hand, the balancedness condition for games with transferable utilities is obtained by \cite{B63} and \cite{Sh67} independently. \cite{S67} derived the balancedness condition for games with nontransferable utilities from an exchange economy with convex preferences. \cite{M98} showed that the Bondareva-Shapley balancedness condition characterizes the existence of an equilibrium in an exchange economy with indivisible goods and transferable utilities.

Our work is inspired by \cite{NV18,NV19}, who applied Scarf's lemma to discrete matching with responsive preferences in the presence of couples or proportional constraints. They showed that firms' quotas or proportional constraints can be modified slightly to obtain an integral vertex of the polytope in Scarf's lemma that corresponds to a stable matching. On the other hand, \cite{H23} proposed an integer feasibility program that corresponds to a transformation from a stable schedule matching into a stable matching. We find the feasibility program of the latter work similar as the polytope in Scarf's lemma. Nguyen and Vohra's works and this similarity motivate us to apply Scarf's method to a general many-to-one market. We cannot apply Scarf's lemma directly since the set of stable matchings is not equivalent to the core. The reduction of \cite{K10} is useful as it transforms the set of stable matchings of the original market into the core of a modified market. 

Schedule matching has been studied by \cite{BB02} in one-to-one matching and by \cite{AG03} in many-to-one matching with substitutable preferences. Schedule matching can also be viewed as matching probabilistically, the latter has been studied by \cite{KU15} in school choice. The existence theorem of \cite{CKK19} implies the existence of a stable schedule matching in a general many-to-one market. We can also derive condition (\ref{balanced}) from this result. Scarf's lemma provides an algorithm for computing a stable schedule matching and facilitates the extension of condition (\ref{balanced}) to the $\pi$-concavity.

When there is no complementarity in firms' preferences, \cite{E12} showed that firms and workers essentially bargain over one dimension despite that contracts may be multidimensional. See \cite{K12} and \cite{S15} for more general results. Concave markets are compatible with complementarities and allow multidimensional negotiations between firms and workers.

The remainder of this paper is organized as follows. Section \ref{Sec_M} presents the model of many-to-one matching and the problem of schedule matching. Section \ref{Sec_pi} introduces concave markets and our existence theorem. Section \ref{Sec_App} provides a class of concave markets. We introduce stability for schedule matching in Section \ref{Sec_stableSch} and use Scarf's algorithm to find a stable schedule matching in Section \ref{Sec_Alg}.

\section{Model\label{Sec_M}}

\subsection{Many-to-one matching}

There is a finite set $F$ of firms and a finite set $W$ of workers with $|F|,|W|\geq2$.\footnote{A stable matching always exists when there is only one firm (or one worker): Among all individually rational matchings, the one that the firm (resp. the worker) prefers most is stable.} Let $N\equiv F\cup W$ and $n\equiv|N|$. Let $X$ be a finite set of contracts in which each contract $x\in X$ is signed by one firm $x_F\in F$ and one worker $x_W\in W$. For each subset of contracts $Y\subseteq X$ and each agent $i\in N$, let $Y_i\equiv\{x\in Y|x_F=i \text{ or } x_W=i\}$ be the subset of $Y$ in which each contract is signed by agent $i$. For each subset of contracts $Y\subseteq X$, let $W(Y)\equiv\{x_W|x\in Y\}$ be the set of workers who have contracts in $Y$, $F(Y)\equiv\{x_F|x\in Y\}$ the set of firms who have contracts in $Y$, and $N(Y)\equiv W(Y)\cup F(Y)$ the set of agents who have contracts in $Y$. For each $w\in W$, define
\begin{equation*}
\mathcal{A}^w\equiv\{\{x\}|x\in X_w\}\cup\{\emptyset\}.
\end{equation*}
Each set from $\mathcal{A}^w$ is called an \textbf{assignment of worker $w$} in the market. Let $\mathcal{A}^W\equiv\cup_{w\in W}\mathcal{A}^w$ be the collection of workers' assignments. Each worker $w\in W$ has a complete, transitive, and strict preference $\succ_w$ over its possible assignments from $\mathcal{A}^w$. Notice that we define a worker's preference to be over singletons\footnote{A singleton is a set that contains exactly one element.} rather than contracts. We assume each contract $x\in X$ is acceptable for the associated worker $x_W$, that is, $\{x\}\succ_w\emptyset$ for each $w\in W$ and $x\in X_w$. If a worker considers a contract unacceptable, this contract is not included in $X$. For each worker $w\in W$ and any two of her assignments $Y,Z\in \mathcal{A}^w$, we write $Y\succeq_w Z$ if $Y\succ_w Z$ or $Y=Z$. For each firm $f\in F$, define
\begin{equation*}
\mathcal{A}^f\equiv\{Z\subseteq X_f||Z_w|\leq1\text{ for each } w\in W\}
\end{equation*}
to be the collection of contracts signed by firm $f$ in which there is at most one contract for each worker. Each set of contracts from $\mathcal{A}^f$ is called an \textbf{assignment of firm $f$} in the market. Let $\mathcal{A}^F\equiv\cup_{f\in F}\mathcal{A}^f$ be the collection of firms' assignments. Each firm $f\in F$ has a complete, transitive, and strict preference $\succ_f$ over its possible assignments from $\mathcal{A}^f$. For each firm $f\in F$ and any two of its assignments $Y,Z\in \mathcal{A}^f$, we write $Y\succeq_fZ$ if $Y\succ_fZ$ or $Y=Z$. A nonempty assignment $Y\in \mathcal{A}^f$ is called \textbf{acceptable} for firm $f$ if $Y\succ_f\emptyset$. Let $\overline{\mathcal{A}}^f\equiv\{Y\in\mathcal{A}^f|Y\succ_f\emptyset \}$ be the collection of firm $f$'s acceptable assignments, and $\overline{\mathcal{A}}^F\equiv\cup_{f\in F}\overline{\mathcal{A}}^f$ the collection of firms' acceptable assignments. Let $\succ_F$ and $\succ_W$ be the preference profile of firms and the preference profile of workers, respectively.  A many-to-one matching market is summarized as a tuple: $\Gamma=(F,W,X,\succ_F,\succ_W)$.

A market is in \textbf{basic setting} if there is at most one contract between each firm $f\in F$ and each worker $w\in W$: $|X_f|\cap|X_w|\leq1$. We often provide examples in basic setting (such as the market (\ref{exam_in1})) in which we abuse the notations as follows: For any firms $f,f'\in F$, any worker $w\in W$, and any two subset of workers $S,S'\subseteq W$, we write $S\succ_fS'$ if firm $f$ prefers employing workers from $S$ to employing workers from $S'$, and write $f\succ_wf'$ if worker $w$ prefers working for firm $f$ to working for firm $f'$.

A subset $M\subseteq X$ of contracts is called a \textbf{matching} if $|M_w|\leq1$ for each $w\in W$, that is, if $M$ includes at most one contract for each worker. Therefore, if a set of contracts $M$ is a matching, then its subset $M_f$ is an assignment of firm $f$ for each $f\in F$, and its subset $M_w$ is an assignment of worker $w$ for each $w\in W$.

\begin{definition}\label{def_stable}
\normalfont
An assignment $Y\in \mathcal{A}^f$ of firm $f$ blocks a matching $M$ if $Y\succ_fM_f$, and $Y_w\succeq_w M_w$ for each $w\in W(Y)$. A matching $M$ is \textbf{stable} if it cannot be blocked.
\end{definition}

Notice that firms' individual rationalities of a stable matching are implied by the above definition: In a matching $M$, if firm $f$ wants to unilaterally drop some contracts, then there is $Y\subset M_f$ that blocks $M$. The set of stable matchings defined by Definition \ref{def_stable} is equivalent to the core defined by weak domination of \cite{RS90} and the set of stable matchings of \cite{HM05}. Our model differs from the latter work in that we assume each firm has a preference order over its assignments, while the Hatfield-Milgrom market does not make this assumption. Notice that we use firms' preference orders in condition (\ref{balanced}).

\begin{example}\label{exam_balanced}
\normalfont
Consider a market with two firms and two workers. Firm $f_1$ would like to provide two salaries $4,5$, and two health plans ${c,d}$, for workers. Health plan $c$ is attractive to firm $f_1$ if firm $f_1$ hires only one worker, but plan $d$ becomes more attractive to firm $f_1$ when firm $f_1$ hires two workers. This preference is an instance of the case provided by \cite{E12} that illustrates how a firm and workers negotiate in two dimensions. There are two contracts, $x_{5c}$ and $x_{5d}$, between firm $f_1$ and worker $w_1$, and two contracts, $y_{4d}$ and $y_{5d}$, between firm $f_1$ and worker $w_2$, where the subscripts refer to the salaries and the health plans provided in the contracts. Firm $f_2$ can sign contract $z_1$ with worker $w_1$ and contract $z_2$ with worker $w_2$. The agents have the preferences
\begin{equation*}
\begin{aligned}
&f_1: \{x_{5d},y_{4d}\}\succ\{x_{5d},y_{5d}\}\succ\{x_{5c}\}\succ\emptyset \quad &w_1: &\; \{x_{5d}\}\succ \{z_1\}\succ\{x_{5c}\}\succ\emptyset\\
&f_2: \{z_1,z_2\}\succ\{z_2\}\succ\emptyset \quad &w_2: &\; \{z_2\}\succ\{y_{5d}\}\succ\{y_{4d}\}\succ\emptyset.
\end{aligned}
\end{equation*}
The matching $\{z_2\}$ is not stable since it is blocked by firm $f_2$'s assignment $\{z_1,z_2\}$. In forming this blocking assignment, firm $f_2$ and worker $w_1$ become strictly better off while worker $w_2$'s assignment is unchanged. This market has three stable matchings: $\{x_{5d},y_{4d}\}$, $\{x_{5d},y_{5d}\}$, and $\{z_1,z_2\}$. We show that this market satisfies our concavity condition in Section \ref{Sec_pi}. Both firms' preferences in this example involve complementarities. For instance, firm $f_1$ will not choose contract $x_{5d}$ when neither $y_{4d}$ nor $y_{5d}$ is available but will choose $x_{5d}$ when $y_{4d}$ or $y_{5d}$ is available. Hence, each of the latter two contracts has a complementary effect to the contract $x_{5d}$ for firm $f_1$.\footnote{The contract $x_{5d}$ also has complementary effects to $y_{4d}$ and $y_{5d}$ for firm $f_1$ since neither of the latter two contracts would be chosen by firm $f_1$ when $x_{5d}$ is not available.}
\end{example}

We can find a stable matching using the DA algorithm or its variants when there is no complementarity in firms' preferences or complementarities are in specific forms (see, for example, \citealp{HK10}, \citealp{HK15}, and Section \ref{Sec_App} of this paper). Stable matchings are also characterized as fixed points of certain operators; see \cite{A00}, \cite{F03}, \cite{EO04,EO06}, and \cite{HM05}, among others. When there is no complementarity in agents' preferences, the operators are monotone and converge to stable matchings.

\subsection{Generalized schedule matching\label{Sec_Sch}}

In a many-to-one matching market, a schedule matching of \cite{AG03} allows each worker to schedule her time among her acceptable firms, and each firm to schedule its time among its acceptable sets of workers. We introduce a generalization of a schedule matching. A scheme $\pi=(\pi_Y)_{Y\in\overline{\mathcal{A}}^F\cup\{N\}}$ is a family of vectors where $\pi_Y\in \mathbb{R}^{N}_{+}$, $\pi_Y^{N(Y)}\gg\mathbf{0}$, and $\pi_Y^{N\setminus N(Y)}=\mathbf{0}$ for each $Y\in\overline{\mathcal{A}}^F$, and $\pi_{N}\gg\mathbf{0}$.\footnote{By $\pi_Y\in \mathbb{R}^{N}_{+}$, we mean that each component of $\pi_Y$ is nonnegative. By $\pi_Y^{N(Y)}\gg\mathbf{0}$, we mean that each $\pi_Y(i)$ with $i\in N(Y)$ is positive.} For each firm $f\in F$, the component $\pi_N(f)$ is firm $f$'s capacity for some resource; for each worker $w\in W$, the component $\pi_N(w)$ is the labor supply of worker $w$. For firm $f$'s acceptable assignment $Y\in\overline{\mathcal{A}}^f$, the component $\pi_Y(f)$ is the resource input intensity of firm $f$ in assignment $Y$; if worker $w\in W(Y)$, the component $\pi_Y(w)$ is the labor intensity of worker $w$ when worker $w$ is working in some firm's assignment $Y$.

\begin{definition}\label{def_schedule}
\normalfont
Given a scheme $\pi$, a \textbf{$\pi$-schedule matching} is a vector $\mathbf{t}\in R_+^{\overline{\mathcal{A}}^F}$ satisfying $\sum_{Y\in \overline{\mathcal{A}}^F}\mathrm{t}(Y)\pi_Y(i)\leq\pi_N(i)$ for each $i\in N$.
\end{definition}
A $\pi$-schedule matching $\mathbf{t}$ assigns each acceptable assignment of some firm $Y\in\overline{\mathcal{A}}^F$ a time share $\mathrm{t}(Y)$. Notice that each firm $f\in F$ can schedule its time only over its acceptable assignments. For each firm $f\in F$, the term $\sum_{Y\in \overline{\mathcal{A}}^F}\mathrm{t}(Y)\pi_Y(f)$ is the firm's total resource input; for each worker $w\in W$, the term $\sum_{Y\in \overline{\mathcal{A}}^F}\mathrm{t}(Y)\pi_Y(w)$ is the worker's total workload. Therefore, the constraints in this definition mean that each firm's total resource input does not exceed its resource capacity, and each worker's total workload does not exceed her labor supply. Agent $i\in N$ is called full matched in a $\pi$-schedule matching $\mathbf{t}$ if $\sum_{Y\in \overline{\mathcal{A}}^F}\mathrm{t}(Y)\pi_Y(i)=\pi_N(i)$ holds.

\begin{example}\label{exam_schedule}
\normalfont
A $\pi$ scheme for the market in Example \ref{exam_balanced} is given below.

\medskip

\begin{tikzpicture}[baseline = (M.west)]
    \matrix(M)
    [
        matrix of math nodes,
        left delimiter = (,
        right delimiter = )
    ]
    {
        \qquad 4\qquad & \qquad 2\qquad & \quad 4\quad & \quad 0\quad & \quad 0\quad \\
        0 & 0 & 0 & 2 & 2 \\
        2 & 1 & 2 & 1 & 0 \\
        2 & 1 & 0 & 3 & 3 \\
            };
     \draw[decorate]
         node (f1) at ($(M-1-1) + (-2.1, 0)$) {$f_1$}
         node (f2) at ($(M-2-1) + (-2.1, 0)$) {$f_2$}
         node (w1) at ($(M-3-1) + (-2.1, 0)$) {$w_1$}
         node (w2) at ($(M-4-1) + (-2.1, 0)$) {$w_2$}
         node (f1) at ($(M-1-1) + (0, 0.8)$) {$\{x_{5d},y_{4d}\}$}
         node (f2) at ($(M-1-2) + (0, 0.8)$) {$\{x_{5d},y_{5d}\}$}
         node (w1) at ($(M-1-3) + (0, 0.8)$) {$\{x_{5c}\}$}
         node (w2) at ($(M-1-4) + (0, 0.8)$) {$\{z_1,z_2\}$}
         node (w2) at ($(M-1-5) + (0, 0.8)$) {$\{z_2\}$};
\end{tikzpicture}
\begin{tikzpicture}[baseline = (M.west)]
    \tikzset{brace/.style = {decorate, decoration = {brace, amplitude = 5pt}, thick}}
        \matrix(M)
    [
        matrix of math nodes,
        left delimiter = (,
        right delimiter = )
    ]
    {
       t_1\\
t_2\\
t_3\\
t_4\\
t_5\\
    };
\end{tikzpicture}$\leq$
\begin{tikzpicture}[baseline = (M.west)]
    \tikzset{brace/.style = {decorate, decoration = {brace, amplitude = 5pt}, thick}}
        \matrix(M)
    [
        matrix of math nodes,
        left delimiter = (,
        right delimiter = )
    ]
    {
5\\
3\\
2\\
3\\
    };
\end{tikzpicture}

Vectors of $(\pi_Y)_{Y\in\overline{\mathcal{A}}^F}$ are given by the columns of the matrix at left-hand side, the vector at right-hand side is $\pi_N$. The above system of inequalities is the constraint in Definition \ref{def_schedule}. For example, firm $f_1$'s resource input intensities for the assignments are given by the first row, thus, the inequality $4t_1+2t_2+4t_3\leq5$ means that firm $f_1$'s total resource input does not exceed its resource capacity. Worker $w_1$'s labor intensities for the assignments are given by the third row, thus, the inequality $2t_1+t_2+2t_3+t_4\leq2$ means that worker $w_1$'s total workload does not exceed her labor supply.
\end{example}

For any set of agents $S\subseteq N$, let $\mathrm{ind}(S)\in\{0,1\}^N$ be the indicator vector of the set $S$.\footnote{The indicator vector $\mathrm{ind}(S)$ is a vector in $\{0,1\}^{N}$ such that $\mathrm{ind}(S)(i)=1$ if $i\in S$, and $\mathrm{ind}(S)(i)=0$ if $i\notin S$.} A $\pi$-schedule matching is called a \textbf{schedule matching} if the $\pi$ scheme is defined by $\pi_Y=\mathrm{ind}(N(Y))$ for each $Y\in\overline{\mathcal{A}}^F$ and $\pi_N=\mathbf{1}$. Notice that the time share for each $Y\in\overline{\mathcal{A}}^F$ is a proportion in $[0,1]$ for a schedule matching, while the time shares in a $\pi$-schedule matching are not necessarily proportions.

The notion of the $\pi$-schedule matching is an analogue to the $\pi$-balanced collection of coalitions proposed by \cite{B70}. There is also a difference. The $\pi$-balanced collection of coalitions specifies different parameters over different coalitions, whereas our $\pi$ scheme specifies different parameters over not only different coalitions but also different sets of contracts signed by agents from the same coalition.

\section{Concave markets\label{Sec_pi}}

Let $\mathcal{A}^F_w\equiv\{Y\in\mathcal{A}^F|Y_w\neq\emptyset\}$ be the collection of firms' assignments in which each assignment involves a contract signed by worker $w$. We call elements of $\mathcal{A}^F_w\cup\{\emptyset\}$ \textbf{situations} of worker $w$. We also call firm $f$'s assignments from $\mathcal{A}^f$ situations of firm $f$. In words, a situation of an agent is a set of contracts in which the agent cooperates with other agents. For any worker $w\in W$ and any two of her situations $Y,Z\in \mathcal{A}^F_w\cup\{\emptyset\}$, we write $Y\rhd_w Z$ if (i) $Y_w\succ_wZ_w$, or (ii) $Y_w=Z_w$, and $Y\succ_fZ$ where $\{f\}=Y_f=Z_f$. The order $\rhd_w$ is modified from worker $w$'s preference $\succ_w$ by imposing the artificial externalities stated in (\ref{balanced}). The term $Y\rhd_w Z$ means that, if we consider worker $w$ to be slightly better off when her employer becomes better off, she prefers her situation $Y$ than situation $Z$. Notice that the order $\rhd_w$ for worker $w$ is complete, transitive, and strict over her situations from $\mathcal{A}^F_w\cup\{\emptyset\}$. For any worker $w\in W$ and any two of her situations $Y,Z\in \mathcal{A}^F_w\cup\{\emptyset\}$, we write $Y\unrhd_wZ$ if $Y\rhd_wZ$ or $Y=Z$. For any matching $M$ and any worker $w\in W$, let $M(w)$ be worker $w$'s situation in $M$. That is,
\begin{equation*}
M(w)\equiv\left\{
\begin{aligned}
\{x\in M|\{x_F\}=F(M_w)\},\qquad &\text{ if } M_w\neq\emptyset,\\
\emptyset,\qquad\qquad &\text{ if } M_w=\emptyset.
\end{aligned}
\right.
\end{equation*}

Given a $\pi$-schedule matching $\mathbf{t}$, let $\widetilde{\mathbf{t}}\in(\overline{\mathcal{A}}^F\cup\{\emptyset\})^N$ specify each agent's worst situation in $\mathbf{t}$, where we assume each worker to be slightly better off when her employer becomes better off. In particular, (i) for each firm $f\in F$,
\begin{equation*}
\widetilde{\mathrm{t}}(f)\equiv\left\{
\begin{aligned}
\min_{\succ_f}\{Y\in\overline{\mathcal{A}}^f|\mathrm{t}(Y)>0\},\qquad \text{ if }\sum_{Y\in \overline{\mathcal{A}}^F}\mathrm{t}(Y)\pi_Y(i)=\pi_N(i),\\
\emptyset,\qquad\qquad \text{ if }\sum_{Y\in \overline{\mathcal{A}}^F}\mathrm{t}(Y)\pi_Y(i)<\pi_N(i).
\end{aligned}
\right.
\end{equation*}

(ii) for each worker $w\in W$,
\begin{equation*}
\widetilde{\mathrm{t}}(w)\equiv\left\{
\begin{aligned}
\min_{\rhd_w}\{Y\in\mathcal{A}^F_w|\mathrm{t}(Y)>0\},\qquad \text{ if }\sum_{Y\in \overline{\mathcal{A}}^F}\mathrm{t}(Y)\pi_Y(w)=\pi_N(w),\\
\emptyset,\qquad\qquad \text{ if }\sum_{Y\in \overline{\mathcal{A}}^F}\mathrm{t}(Y)\pi_Y(w)<\pi_N(w).
\end{aligned}
\right.
\end{equation*}

\begin{definition}
\normalfont
A matching $M$ \textbf{dominates} a $\pi$-schedule matching $\mathbf{t}$ if
\begin{description}
\item[(\romannumeral1)] for each $f\in N$, $M_f\succeq_f\widetilde{\mathrm{t}}(f)$;

\item[(\romannumeral2)] for any $w\in W$, $M(w)\unrhd_w\widetilde{\mathrm{t}}(w)$.
\end{description}
\end{definition}
Condition (i) says that each firm receives an assignment in $M$ it weakly prefers to the worst assignment it receives in the $\pi$-schedule matching. Condition (ii) means that, if we consider each worker to be slightly better off when her employer becomes better off, then each worker is weakly better off in $M$ than her worst situation in the $\pi$-schedule matching.

\begin{definition}
\normalfont
A market is called \textbf{$\pi$-concave} if there is a $\pi$ scheme such that for any $\pi$-schedule matching $\mathbf{t}$, there is a matching that dominates $\mathbf{t}$.
\end{definition}

We call a market \textbf{concave} if for any schedule matching $\mathbf{t}$, there is a matching that dominates $\mathbf{t}$. Since a schedule matching is a $\pi$-schedule matching for a specific $\pi$ scheme, a concave market is $\pi$-concave. The converse is not true (see market (\ref{exam_pibalanced}) below).

\begin{theorem}\label{thm_balanced}
\normalfont
A stable matching exists in a $\pi$-concave market.
\end{theorem}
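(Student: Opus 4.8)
The plan is to obtain the stable matching in two stages: first produce a fractional object — a \emph{stable} $\pi$-schedule matching, whose existence is unconditional — and then use the $\pi$-concavity hypothesis to ``round'' it to an integral (full-time) matching that inherits stability. The externality-augmented order $\rhd_w$ is exactly the device of \cite{K10} that lets worker-side blocking be detected through firm preferences, and it is built into the definitions of $\widetilde{\mathrm{t}}$ and of domination, so the two stages dovetail.

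Stage 1 (Scarf). Fix the $\pi$ scheme witnessing $\pi$-concavity and consider the feasibility system of Definition \ref{def_schedule}, i.e. the polytope $\{\mathbf{t}\ge \mathbf{0} : \sum_{Y\in\overline{\mathcal{A}}^F}\mathrm{t}(Y)\pi_Y(i)\le \pi_N(i),\ i\in N\}$, augmented with one slack column per agent $i$ (the unit vector with a $1$ in row $i$, representing $i$ unmatched); boundedness holds because $\pi_Y(i)>0$ whenever $i\in N(Y)$. I would equip each row with the order in which its active columns are compared: firm $f$ ranks the assignments $Y\in\overline{\mathcal{A}}^f$ by $\succ_f$, worker $w$ ranks the situations $Y\in\mathcal{A}^F_w$ by $\rhd_w$, and each agent's slack column is ranked last. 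Scarf's lemma (Theorem 2 of \cite{S67}) then yields a \emph{dominating} basic feasible solution $\mathbf{t}$, whose combinatorial consequence I would record as the working notion of a stable $\pi$-schedule matching: for every $f\in F$ and every $Y\in\overline{\mathcal{A}}^f$ there is a full-matched agent $i\in N(Y)$ with $\widetilde{\mathrm{t}}(f)\succeq_f Y$ if $i=f$, and $\widetilde{\mathrm{t}}(w)\unrhd_w Y$ if $i=w\in W(Y)$.

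Stage 2 (rounding and stability). By $\pi$-concavity applied to the $\mathbf{t}$ from Stage 1, there is a full-time matching $M$ dominating $\mathbf{t}$, so $M_f\succeq_f\widetilde{\mathrm{t}}(f)$ for each $f\in F$ and $M(w)\unrhd_w\widetilde{\mathrm{t}}(w)$ for each $w\in W$. To see $M$ is stable, suppose some $Y\in\mathcal{A}^f$ blocks it, i.e. $Y\succ_f M_f$ and $Y_w\succeq_w M_w$ for each $w\in W(Y)$. Since $\widetilde{\mathrm{t}}(f)\succeq_f\emptyset$, domination gives $M_f\succeq_f\emptyset$, hence $Y\succ_f M_f\succeq_f\emptyset$ shows $Y\in\overline{\mathcal{A}}^f$ and Stage 1's property applies to $Y$; let $i\in N(Y)$ be the full-matched agent it supplies. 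If $i=f$, then $M_f\succeq_f\widetilde{\mathrm{t}}(f)\succeq_f Y$ contradicts $Y\succ_f M_f$. If $i=w\in W(Y)$, transitivity of $\unrhd_w$ yields $M(w)\unrhd_w\widetilde{\mathrm{t}}(w)\unrhd_w Y$, so $M(w)\unrhd_w Y$; unwinding $\rhd_w$, either $M(w)_w\succ_w Y_w$, contradicting $Y_w\succeq_w M_w=M(w)_w$, or $M(w)_w=Y_w$, forcing $w$ to have the same employer $f$ in both situations so that $M(w)=M_f$, together with $M_f\succeq_f Y$, again contradicting $Y\succ_f M_f$. In every case we reach a contradiction, so $M$ is stable.

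The main obstacle is Stage 1: arranging the Scarf matrix so that its dominating vertex is genuinely a \emph{stable} $\pi$-schedule matching in the sense stated above — in particular checking that the per-row orders $\succ_f$ and $\rhd_w$ are defined on precisely the columns with positive entries in that row, and that ``full matched'' (equality in Definition \ref{def_schedule}) coincides with a binding row in Scarf's sense, so that a tight row is exactly what permits domination of an outside column. This is the content deferred to Sections \ref{Sec_stableSch}--\ref{Sec_Alg}. Stage 2, by contrast, is a short case analysis once the correct notion of stable $\pi$-schedule matching is in hand; the only delicate point there is the worker case, where the equality $M(w)_w=Y_w$ must first be used to identify $M(w)$ with $M_f$ before the firm inequality $Y\succ_f M_f$ can be invoked.
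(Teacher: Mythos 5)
Your proposal is correct and takes essentially the same route as the paper: your Stage 1 is exactly the paper's Section \ref{Sec_Alg} construction (Scarf's lemma applied to the constraint matrix with slack columns for agents and row orders given by $\succ_f$ and the externality-augmented $\rhd_w$, with the agent's own slack column ranked worst, yielding a stable $\pi$-schedule matching in the sense of Definition \ref{def_stablesch}), and your Stage 2 reproduces the domination-plus-transitivity argument at the end of Section \ref{Sec_stableSch}, merely organized as a direct contradiction through the full-matched agent rather than by lifting the blocking assignment from $M$ back to $\mathbf{t}$. Both stages, including the delicate worker case where $M(w)_w = Y_w$ forces $M(w)=M_f$ before the firm comparison is invoked, match the paper's own proof.
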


The market in Example \ref{exam_balanced} is concave. To see this, consider a schedule matching $\mathbf{t}$ of this market, if $\mathbf{t}$ is integral (i.e., each component of $\mathbf{t}$ is 0 or 1), then it corresponds to a matching that dominates itself; if $\mathbf{t}$ is not integral and $\mathrm{t}(\{z_2\})=0$, then the matching $\{x_{5d},y_{5d}\}$ dominates $\mathbf{t}$;\footnote{Note that in this case we know that $\mathrm{t}(\{x_{5d},y_{4d}\})<1$, and thus, firm $f_1$'s worst assignment in $\mathbf{t}$ is not $\{x_{5d},y_{4d}\}$. We also know that $\mathrm{t}(\{z_1,z_2\})<1$, and thus, firm $f_2$ is not full matched and $w_2$'s worst assignment in $\mathbf{t}$ is not $\{z_2\}$.} if $\mathbf{t}$ is not integral and $\mathrm{t}(\{z_2\})>0$, then the matching $\{x_{5c},z_2\}$ dominates $\mathbf{t}$.\footnote{Note that $\mathrm{t}(\{z_2\})>0$ implies $\mathrm{t}(\{x_{5d},y_{4d}\})+\mathrm{t}(\{x_{5d},y_{5d}\})<1$.} This market is also $\pi$-concave with respect to the $\pi$ scheme presented in Example \ref{exam_schedule}, which we explain in Section \ref{Sec_Alg}. The next is a market in basic setting that is not concave but is $\pi$-concave.

\begin{equation}\label{exam_pibalanced}
\begin{aligned}
&f_1: \{w_1,w_2\}\succ\emptyset \qquad\qquad\qquad\qquad &w_1: &\quad f_1\succ f_2\\
&f_2: \{w_1\}\succ\{w_2\}\succ\emptyset \qquad\qquad\qquad\qquad &w_2: &\quad f_1\succ f_2
\end{aligned}
\end{equation}
This market is not concave since no matching dominates the schedule matching (\ref{intro_schedule}).\footnote{If a matching dominates (\ref{intro_schedule}), then both workers should be employed. However, the matching $(f_1,w_1,w_2)$ does not dominate (\ref{intro_schedule}).} However, the market is $\pi$-concave with respect to the $\pi$ scheme defined by $\pi_Y=\mathrm{ind}(N(Y))$ for all $Y\in\overline{\mathcal{A}}^F$, and $\pi_N=(1,3,1,1)$ in which the first to fourth components correspond to $f_1,f_2,w_1$, and $w_2$, respectively.\footnote{In any $\pi$-schedule matching $\mathbf{t}$ under this $\pi$ scheme, firm $f_2$ must not be full matched. Then, firm $f_1$ hiring both workers is always a matching that dominates $\mathbf{t}$.} In fact, matching both workers to firm $f_1$ is a stable matching.

We can prove Theorem \ref{thm_balanced} by applying Scarf's lemma (Theorem 2 of \citealp{S67}) to the market with the artificial externalities of \cite{K10}. Scarf's constructive proof of Scarf's lemma also provides a constructive proof for our existence theorem, which we present in Section \ref{Sec_Alg}. In particular, we introduce the stability for $\pi$-schedule matching in Section \ref{Sec_stableSch}. In Section \ref{Sec_Alg}, we use Scarf's algorithm to find a stable $\pi$-schedule matching, then the matching that dominates this $\pi$-schedule matching is stable.

\section{Application\label{Sec_App}}

In this section, we provide a class of concave markets in basic setting. We assume the set of workers $W$ is partitioned into a set of \textbf{leaders} $L$ and a set of \textbf{followers} $O$. Each follower $o\in O$ follows a leader $o_L\in L$. A set of workers is called a \textbf{team} if it contains exactly one leader $l\in L$ and includes only followers who follow this leader $l$. That is, a set of workers $S=\{l\}\cup S'$ with $l\in L$ and $S'\subseteq O$ is called a team if $o_L=l$ for each follower $o\in S'$. Let $\mathcal{S}^T$ be the collection of all teams.

We study the market in which each firm has a unit demand over the teams, that is, $S\succ_f\emptyset$ implies $S\in\mathcal{S}^T$.

\begin{example}\label{exam_app}
\normalfont
Suppose there are two leaders $l_1, l_2$, and three followers $o_1, o_2, o_3$. Follower $o_1$ follows leader $l_1$. Both followers $o_2$ and $o_3$ follow leader $l_2$. There are two firms with the following preferences.
\begin{equation}\label{pre_app}
\begin{aligned}
&f_1: \{l_2,o_2,o_3\}\succ \{l_1,o_1\}\succ \{l_2\}\succ \emptyset\\
&f_2: \{l_2,o_2\}\succ\{l_1,o_1\}\succ \{l_2,o_3\}\succ \emptyset
\end{aligned}
\end{equation}
Notice that each acceptable set of each firm is a team that contains a leader and followers (possibly without any follower) who follow this leader. Hence, each firm has a unit demand over the teams. 
\end{example}

\begin{theorem}\label{thm_app}
\normalfont
The market of matching firms with leaders and followers is concave if firms have unit-demand preferences over the teams.
\end{theorem}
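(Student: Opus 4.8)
The plan is to fix an arbitrary schedule matching $\mathbf{t}$ and build a full-time matching $M$ dominating it, where $M$ must genuinely depend on $\mathbf{t}$: if $\mathbf{t}$ spends all of some firm $f$'s time on its favourite team, then $\widetilde{\mathrm{t}}(f)$ is that team and any dominating matching is forced to award it, so no single matching can dominate every $\mathbf{t}$. The structural fact I would lean on is that, under unit-demand-over-teams, every assignment in $\overline{\mathcal{A}}^F$ is a team with exactly one leader, and a follower $o$ appears only in teams led by $o_L$. Hence $\mathbf{t}$ projects to a fractional one-to-one matching between firms and leaders, $x_{fl}=\sum_{Y:F(Y)=f,\,l\in N(Y)}\mathrm{t}(Y)$, with all row and column sums at most $1$, equal to $1$ exactly at the full-matched firms and leaders; and the followers, being private to single leaders, never create conflicts across distinct leaders.

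The key lemma I would isolate is that a full-matched follower sits inside every support team of her leader. Indeed, if $o$ (following $l$) is full matched then $\sum_{Y\ni o}\mathrm{t}(Y)=1$, while every team containing $o$ also contains $l$, so $1=\sum_{Y\ni o}\mathrm{t}(Y)\le\sum_{Y\ni l}\mathrm{t}(Y)\le1$; equality forces $\sum_{Y\ni l,\ o\notin Y}\mathrm{t}(Y)=0$, i.e.\ no support team of $l$ omits $o$. In particular $o$'s being full matched forces $l$ to be full matched as well, and every team that $l$ could be assigned from the support of $\mathbf{t}$ already carries all of $l$'s full-matched followers.

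With this in hand I would round. Since the bipartite fractional-matching polytope is integral (K\"onig/Birkhoff), $x$ is a convex combination of integral firm-leader matchings each supported on edges with $x_{fl}>0$, and each such matching saturates every tight vertex; picking one yields an integral matching $\mu$ on support edges that covers all full-matched firms and all full-matched leaders. For every pair $(f,l)\in\mu$ I would select any team $T_{f,l}$ led by $l$ with firm $f$ in the support of $\mathbf{t}$, and set $M=\{T_{f,l}:(f,l)\in\mu\}$. Distinct edges of $\mu$ use distinct firms and distinct leaders, hence disjoint follower pools, so these teams are vertex-disjoint and $M$ is a genuine matching.

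Finally I would verify domination directly. Because $T_{f,l}$ lies in the support of $\mathbf{t}$, it is weakly above each participant's worst support situation: a covered firm gets $M_f\succeq_f\widetilde{\mathrm{t}}(f)$, a covered leader gets $M(l)\unrhd_l\widetilde{\mathrm{t}}(l)$, and any follower inside $T_{f,l}$ gets $M(o)\unrhd_o\widetilde{\mathrm{t}}(o)$. A full-matched firm or leader is covered by construction, and a full-matched follower $o$ is covered because the lemma makes $l$ full matched (hence $\mu$-covered) and places $o$ in the chosen team $T_{f,l}$; every non-full-matched agent has threshold $\emptyset$, which is met whether or not it ends up employed. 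The only real work is the lemma, which is where the team geometry --- one leader per team and followers private to their leader --- does all the lifting; once it is proved, the remainder is bipartite rounding plus bookkeeping. A deferred-acceptance reading of the same $\mu$ then recovers the stable matching promised in the introduction.
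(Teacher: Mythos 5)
Your proof is correct, and it reaches the theorem by a recognizably different technical route than the paper's. The paper never forms your projected fractional matching $x_{fl}$: it keeps whole coalitions in the linear system, representing the schedule matching as a point of the polytope $\{\mathbf{u}\mid\mathbf{u}\geq\mathbf{0},\ H'\mathbf{u}=\mathbf{1}\}$, where $H'$ is the coalition-incidence matrix restricted to rows $F\cup L$ and to support columns (including slack columns for vacant time); it then invokes balanced-matrix theory (Lemma 2.1 of Fulkerson--Hoffman--Oppenheim, noting $H'$ is the incidence matrix of a bipartite graph and hence balanced) to extract an integral vertex of that same polytope, which directly names the essential coalitions to implement. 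You instead aggregate time shares onto firm--leader edges, round there using the standard integrality of the bipartite fractional matching polytope (with the correct observation that every matching in a convex decomposition saturates each tight vertex), and then need a lifting step --- choosing one support team per selected edge --- whose correctness is not automatic but is rescued by your isolated lemma that every support team of a leader contains each of her full-matched followers. That lemma is precisely the observation the paper also relies on (it appears inline, in the contradiction argument showing an unmatched follower cannot be full matched), so the two proofs share their combinatorial heart; what differs is the rounding vehicle. Your version buys elementarity: Birkhoff/K\"onig replaces balanced matrices, and no machinery beyond textbook bipartite matching is needed. The paper's version buys generality: rounding directly in coalition space via balancedness uses the leader--follower geometry only through bipartiteness of $H'$, which is why (as the paper notes in a footnote) the same argument extends to markets with balanced firm--worker hypergraphs, whereas your projection-and-lift step is tailored to the one-leader-per-team structure. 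The small points you glossed --- that any support team is acceptable to its firm and to its workers, so a nonempty situation always beats the empty threshold --- do hold, since schedule matchings put weight only on $\overline{\mathcal{A}}^F$ and all contracts in $X$ are worker-acceptable by assumption.
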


\begin{proof}
Let $\mathcal{Q}_f\equiv\{S\in \mathcal{S}^T|S\succ_f\emptyset$, and $f$ is acceptable to $w$ for each $w\in S\}$ be the collection of firm $f$'s possible sets of employees.
We call a set $\{f\}\cup S$ with $f\in F$ and $S\in\mathcal{Q}_f$ an essential coalition. Let $\mathcal{E}$ be the collection of all essential coalitions. Let $H$ be an $|N|\times(|N|+|\mathcal{E}|)$ matrix whose columns are indicator vectors of elements from $N\cup\mathcal{E}$. A schedule matching in a market in basic setting can be represented by a vector $\mathbf{t}\in \mathbb{R}^{N\cup\mathcal{E}}_+$ with $H\mathbf{t}=\mathbf{1}$.\footnote{Note that the component $\mathrm{t}(j)$ with $j\in N$ is the vacant time share of agent $j$ in the schedule matching $\mathbf{t}$.}

Given a schedule matching $\mathbf{t}^*$, define $J^*\equiv\{j\in(N\cup\mathcal{E})|t^*(j)>0\}$, and define $\mathbf{u}^*\equiv(\mathrm{t}^*(j))_{j\in J^*}$. Let $H'$ be the submatrix of $H$ restricted to row indexes from $F\cup L$ and column indexes from $J^*$. We know that $\mathbf{u}^*$ is in the polytope $P\equiv\{\mathbf{u}|\mathbf{u}\geq\mathbf{0}, H'\mathbf{u}=\mathbf{1}\}$. Matrix $H'$ is the incidence matrix of a bipartite graph that has no odd-length cycle, and thus, matrix $H'$ balanced.\footnote{A $\{0,1\}$-matrix is balanced if it has no square submatrix of odd order with exactly two 1s in each row and column; see Chapter 21.5 of \cite{S86}. Let $G'$ be a bipartite graph whose vertices are divided into $F$ and $L$. An edge $fl$ is in $G'$ if there is a column of $H'$ containing a component 1 for $f$ and a component 1 for $l$. Then, matrix $H'$ is the incidence matrix of $G'$.} Then, since the polytope $P$ is nonempty, we know that there is at least an integral vertex $\mathbf{u}^{**}\in\{0,1\}^{J^*}$ on the polytope $P$.\footnote{See Lemma 2.1 of \cite{FHO74}.} Vertex $\mathbf{u}^{**}$ induces the following full-time matching $M$: if $\mathrm{u}^{**}(j)=1$ where $j=(\{f\}\cup S)\in\mathcal{E}$, then match $f$ with $S$; other firms and workers stay unmatched. Notice that we have not matched a leader with more than one firm, then since each follower follows one leader, we have not matched a follower with more than one firm either. Hence, $M$ is a matching. The matching $M$ dominates the schedule matching $\mathbf{t}^*$ since (i) each essential coalition carried out by the matching $M$ has a positive time share in $\mathbf{t}^*$, and (ii) each unmatched agent in $M$ is not full matched in $\mathbf{t}^*$. The statement (ii) is due to the following reasons: If agent $i\in F\cup L$ is not matched in the matching $M$, column $i$ is in the matrix $H'$, and thus, agent $i$ is not full matched in $\mathbf{t}^*$. Suppose a follower $o\in O$ not matched in the matching $M$ is full matched in $\mathbf{t}^*$. Then her leader $o_L$ is also full matched in $\mathbf{t}^*$. We know that column $o_L$ is not in the matrix $H'$, and the column indexes of matrix $H'$ that contains $o_L$ must contain $o$. Thus, the follower $o$ and her leader $o_L$ must be matched with the same firm in the matching $M$. A contradiction. Therefore, if follower $o\in O$ is not matched in the matching $M$, she is not full matched in the schedule matching $\mathbf{t}^*$.
\end{proof}

The market of this section can be viewed as a generalization of a marriage market: A stable matching is essentially a ``stable marriage'' between the firms and the leaders. We can find a stable matching using the following variant DA algorithm. For each firm $f\in F$ and each leader $l\in L$, define
\begin{equation*}
\mathrm{U}(f,l)\equiv\max_{\succ_f}\{\emptyset\}\cup\{S\in\mathcal{S}^T|l\in S, \text{ and } f \text{ is acceptable to all workers from }S\}.
\end{equation*}
to be firm $f$'s favorite set of possible employees that includes leader $l$.
For each firm $f\in F$ and any two leaders $l,l'\in L$, we write $l\geq_fl'$ if $\mathrm{U}(f,l)\succeq_f\mathrm{U}(f,l')$. The variant DA algorithm proceeds as follows.
\medskip
\begin{description}
\item[Step $1$] Each leader $l\in L$ proposes to her most preferred firm $f$. Each firm tentatively accept its most preferred worker according to the order $\geq_f$ among those applicants satisfying $\mathrm{U}(f,l)\neq\emptyset$. Each leader not accepted in this round is rejected by the firm.

\item[Step $k,k\geq2$] Each leader rejected in the previous round proposes to her next best firm, so long as there remains an acceptable firm to whom she has not yet proposed. Each firm that faces new applicants tentatively accepts its most preferred leader according to the order $\geq_f$ among both new applicants satisfying $\mathrm{U}(f,l)\neq\emptyset$ and previously accepted leader. Each leader not accepted in this round is rejected by the firm. 

The algorithm stops when there are no rejections.
\end{description}
\medskip
The algorithm matches each firm $f$ with $\mathrm{U}(f,l)$ if firm $f$ accepts leader $l$ at the end of the algorithm. Other firms and workers stay unmatched. We leave it for the reader to prove that the algorithm produces a stable matching.

Using an argument similar as the one used in the proof of Theorem \ref{thm_app}, we can prove that the discrete matching market of \cite{H22} with a balanced firm-worker hypergraph is concave.\footnote{We can use Lemma 2.1 of \cite{FHO74} and a similar argument as that in the proof of Theorem \ref{thm_app} to prove the following statement: If a market has a balanced firm-worker hypergraph, then for any schedule matching, there is a full-time matching in which each essential coalition carried out by this full-time matching has a positive time share in the schedule matching, and each agent unmatched in the full-time matching is not full matched in the schedule matching.} 

The market of this section subsumes the market in Section 4 of \cite{H23}\footnote{In the market in Section 4 of \cite{H23}, we can let the set of leaders be the set that contains exactly one worker from each technology incident to the root and let the followers be the rest workers. Then this market is an instance of our market.} and is independent of the conditions of \cite{H22,H23}. The market (\ref{pre_app}) has neither a totally unimodular demand type nor a balanced firm-worker hypergraph according to the nontrivial odd-length cycle $(f_2$, $\{f_2,l_2,o_2\}$, $o_2$, $\{o_2,l_2,f_1,o_3\}$, $o_3$, $\{o_3,l_2,f_2\}$, $f_2)$ in its firm-worker hypergraph.\footnote{This nontrivial odd-length cycle indicates that the market (\ref{pre_app}) fails the condition of Proposition 1 of \cite{H22}, then Proposition 2 therein implies that the market (\ref{pre_app}) does not have a totally unimodular demand type either.}

It is natural to generalize the market by allowing firms' substitutable preferences over the teams. In particular, the teams are substitutable for a firm if the following statement holds: if the firm hires a team (possibly with some other teams) from a set of available workers, the team will still be hired when any worker outside the team becomes unavailable. We also leave it for the reader to prove the following result: If each firm has a substitutable preference over the teams, a stable matching exists and can be found by another variant DA algorithm.

If there is no follower, the preferences defined above reduce to the substitutable preferences.\footnote{A firm has a substitutable preference if there is no complementarity in its preference. More specifically, a firm has a substitutable preference if any worker chosen by the firm from a set of available workers will still be chosen as the available set shrinks. See Chapter 6 of \cite{RS90}.} It is unknown whether the market with substitutable preferences and our extension are $\pi$-concave. A related result concerns the matching market in which each agent has a transferable utility. The characterization result of \cite{M98} implies that the Bondareva-Shapley balancedness condition holds if each firm has a gross-substitute valuation.

\section{Stable $\pi$-schedule matching}\label{Sec_stableSch}

In this section, we describe agents' preferences in schedule matching under a $\pi$ scheme and define stability for $\pi$-schedule matching. We assume each worker wants to work for her favorite assignment as much as possible, and then work for her second favorite assignment as much as possible, and so on. We assume that a firm's different assignments bring to the firm different profits per unit resource, which decrease along with the firm's preference order $\succ_f$. This assumption induces a best arrange of time shares for each firm over its acceptable assignments  when the firm is offered labor supplies from workers. In particular, for each firm $f\in F$, suppose its acceptable assignment $Y\in \overline{\mathcal{A}}^f$ brings $\mathrm{g}_f(Y)$ per unit resource to firm $f$. We assume $\mathrm{g}_f(Y)>\mathrm{g}_f(Z)>0$ if $Y\succ_fZ\succ_f\emptyset$. Given a $\pi$ scheme and a vector $\mathbf{q}\in R^{W}_{+}$ of labor supplies from workers, each firm $f\in F$ solves the following problem, where the decision variable is firm $f$'s time shares $\mathbf{s}\in R^{\overline{\mathcal{A}}^f}_{+}$ over its acceptable assignments from $\overline{\mathcal{A}}^f$.

\begin{equation}\label{sch_program}
  \begin{aligned}
  \text{maximize } \qquad & \sum_{Y\in\overline{\mathcal{A}}^f}\mathrm{g}_f(Y) \pi_Y(f)\mathrm{s}(Y) \\
  \text{subject to } \qquad & \sum_{Y\in\overline{\mathcal{A}}^f}\pi_Y(f)\mathrm{s}(Y)\leq\pi_N(f) & \quad \\
  \quad \qquad & \sum_{Y\in\overline{\mathcal{A}}^f}\pi_Y(w)\mathrm{s}(Y)\leq \mathrm{q}(w), & \text{ for each } w\in W \\
  \quad \qquad & \mathrm{s}(Y)\geq0, & \text{ for each } Y\in \overline{\mathcal{A}}^f
  \end{aligned}
\end{equation}
When firm $f$ is offered a vector $\mathbf{q}$ of labor supplies from workers, the firm's best arrange of time shares over its acceptable assignments is the solution to the problem (\ref{sch_program}).

In the following discussion, when we talk about the preference of some firm $f$ or some worker $w$ over its assignments, we mean its preference $\succ_f$ or $\succ_w$ over full-time assignments. Intuitively, a $\pi$-schedule matching $\mathbf{t}$ is blocked when some firm $f$ draw workers to carry out an acceptable assignment $Z\in\overline{\mathcal{A}}^f$ of firm $f$ for an additional amount of time share. Some time share of other assignment in $\mathbf{t}$ may be dismissed at the same time to meet firm $f$'s resource constraint or some worker's labor constraint. The firm is willing to participate in a blocking assignment as long as the net effect in profit is positive. We know that (i) if firm $f$ is solving the problem (\ref{sch_program}), it would like to participate in a blocking assignment $Z\in\overline{\mathcal{A}}^f$ when firm $f$ prefers $Z$ to its worst assignment in $\mathbf{t}$: $Z\succ_f\widetilde{\mathrm{t}}(f)$. This is because in this case $Z$ brings larger profit per unit resource than $\widetilde{\mathrm{t}}(f)$, and thus, firm $f$ would like to shift its resource from $\widetilde{\mathrm{t}}(f)$ into $Z$. (ii) The workers drawn by firm $f$ are willing to participate in the blocking assignment $Z$ if each worker $w$ from $W(Z)$ prefers $Z_w$ to her worst assignment in $\mathbf{t}$. (iii) The last question concerns whether worker $w$, whose worst assignment in $\mathbf{t}$ is $Z_w$, can be involved in the blocking assignment $Z$. The answer is positive if, in the schedule matching $\mathbf{t}$, worker $w$'s worst assignment $Z_w$ is from an assignment of firm $f$ that is less preferred than $Z$ by firm $f$. For example, consider a $\pi$-schedule matching $\mathbf{t}=(\frac{1}{3},\frac{1}{3},\frac{1}{3},\frac{1}{3},\frac{1}{3})$ in Example \ref{exam_schedule}. Worker $w_2$'s worst assignment in $\mathbf{t}$ is $\{z_2\}$. Since firm $f_1$ prefers $\{z_1,z_2\}$ to $\{z_2\}$, firm $f_1$ would like to expand its assignment $\{z_2\}$ to $\{z_1,z_2\}$. Since worker $w_1$ is willing to participate in the blocking assignment $\{z_1,z_2\}$, and worker $w_2$'s welfare is unchanged in the expansion from $\{z_2\}$ to $\{z_1,z_2\}$, the blocking assignment $\{z_1,z_2\}$ can be implemented. We can thus summarize (ii) and (iii) as follows: A worker $w$ can be involved in a blocking assignment $Z\in\mathcal{A}^F_w$ if $Z\rhd_w\widetilde{\mathrm{t}}(w)$. Therefore, we have the following definition of stability for $\pi$-schedule matching.

\begin{definition}\label{def_stablesch}
\normalfont
An assignment $Z\in\overline{\mathcal{A}}^f$ of some firm $f\in F$ blocks a $\pi$-schedule matching $\mathbf{t}$ if $Z\succ_f\widetilde{\mathrm{t}}(f)$ and $Z\rhd_w\widetilde{\mathrm{t}}(w)$ for all $w\in W(Z)$.
A $\pi$-schedule matching is stable if it cannot be blocked.
\end{definition}

In other words, if we consider each worker to be slightly better off when her employer becomes better off, some firm's acceptable assignment $Z\in \overline{\mathcal{A}}^F$ blocks a $\pi$-schedule matching $\mathbf{t}$ if each agent $i\in N(Z)$ is better off in $Z$ than in its worst situation in $\mathbf{t}$.

In the case of schedule matching, a firm's optimal arrange of time shares is the firm's best choice from workers' supplies of working time. In this case, Definition \ref{def_stablesch} is equivalent to the definition of stability in \cite{CKK19} using firms' choice functions. Since in schedule matching firms' choice functions derived from the problem (\ref{sch_program}) are continuous, the existence of a stable schedule matching follows from the existence theorem of \cite{CKK19}.

The existence of a stable $\pi$-schedule matching follows from Scarf's lemma. We show how to find a stable $\pi$-schedule matching using Scarf's algorithm in Section \ref{Sec_Alg}. Then, if the market is $\pi$-concave, the matching that dominates this stable $\pi$-schedule matching is stable. To see the last statement, suppose, on the contrary, a matching $M$ dominates a stable $\pi$-schedule matching $\mathbf{t}$ and is blocked by firm $f$'s assignment $Z\in\mathcal{A}^f$. Since $Z$ blocks $M$, we have $Z\succ_f M_f$ and $Z_w\succeq_wM_w$ for all $w\in W(Z)$, which further imply $Z\rhd_wM(w)$ for all $w\in W(Z)$. Since $M$ dominates $\mathbf{t}$, we have $M_f\succeq_f\widetilde{\mathrm{t}}(f)$ and $M(w)\unrhd_w\widetilde{\mathrm{t}}(w)$ for each $w\in W$. Hence, for firm $f$, $Z\succ_f M_f$ and $M_f\succeq_f\widetilde{\mathrm{t}}(f)$ implies $Z\succ_f\widetilde{\mathrm{t}}(f)$, which further implies $Z\in\overline{\mathcal{A}}^f$; for each worker $w\in W(Z)$, $Z\rhd_wM(w)$ and $M(w)\unrhd_w\widetilde{\mathrm{t}}(w)$ implies $Z\rhd_w\widetilde{\mathrm{t}}(w)$. Therefore, the $\pi$-schedule matching $\mathbf{t}$ is blocked by $Z$. A contradiction.

One can also strengthen condition (\ref{balanced}) as: for any stable schedule matching, there is a matching that dominates this stable schedule matching.\footnote{Or, more generally, for any stable $\pi$-schedule matching, there is a matching that dominates this stable $\pi$-schedule matching.} However, the strengthened condition becomes more difficult to verify. The conditions of \cite{H22,H23} are special cases of this strengthened condition and are easy to verify.\footnote{See Theorem 1 of \cite{H23} and Proposition 1 of \cite{H22}, where the latter implies the former. The condition of Proposition 1 of \cite{H22} guarantees the transformation from an arbitrary stable schedule matching into a stable matching. The produced stable matching has the following property: Any agent matched in a coalition in the stable matching has a positive time share for this coalition in the stable schedule matching, and any agent unmatched in the stable matching is not full matched in the stable schedule matching. Thus, the produced stable matching dominates the original stable schedule matching.}

\section{Scarf's algorithm\label{Sec_Alg}}

This section shows how to use Scarf's algorithm to find a stable $\pi$-schedule matching. We will use the market in Example \ref{exam_balanced} and the $\pi$ scheme for this market in Example \ref{exam_schedule} to illustrate the algorithm. This market is $\pi$-concave with respect to the $\pi$ scheme. To see this, notice that the constraint for worker $w_1$'s workload implies $2b_1+b_2+2b_3\leq2$, which further implies that firm $f_1$ must not be full matched in a $\pi$-schedule matching. Similarly, the constraint for worker $w_2$'s workload implies $3b_4+3b_5\leq3$, which further implies that firm $f_2$ must not be full matched in a $\pi$-schedule matching. Hence, for any $\pi$-schedule matching $\mathbf{t}$, (i) if $2\mathrm{t}(\{x_{5d},y_{4d}\})+\mathrm{t}(\{x_{5d},y_{5d}\})<2$, worker $w_1$'s worst situation in $\mathbf{t}$ is not $\{x_{5d},y_{4d}\}$ or $\{x_{5d},y_{5d}\}$, then the matching $\{z_1,z_2\}$ dominates $\mathbf{t}$; (ii) if $2\mathrm{t}(\{x_{5d},y_{4d}\})+\mathrm{t}(\{x_{5d},y_{5d}\})=2$, then the matching $\{x_{5d},y_{5d}\}$ dominates $\mathbf{t}$.

We construct matrix $A$ and matrix $C$ for Theorem 2 of \cite{S67} as follows. Let $A$ be an $n\times(n+|\overline{\mathcal{A}}^F|)$ matrix in which each row represents an agent $i\in N$. The first $n$ columns of $A$ correspond to the agents from $N$, and the rest columns correspond to firms' acceptable assignments from $\overline{\mathcal{A}}^F$. Let $a_{ij}$ with $i\in N$ and $j\in N\cup\overline{\mathcal{A}}^F$ denote the elements of $A$. Let column $j$ of matrix $A$ be $\mathrm{ind}(\{j\})$ for each $j\in N$, and let column $Y$ of matrix $A$ be $\pi_Y$ for each $Y\in\overline{\mathcal{A}}^F$. For instance, the matrix $A$ for the market of Example \ref{exam_balanced} with respect to the $\pi$ scheme of Example \ref{exam_schedule} is
\begin{center}
\begin{tabular}
[c]{c|ccccccccc}
& \quad$f_1$\; & \;$f_2$\; & \;$w_1$\; & \;$w_2$ & $\{x_{5d},y_{4d}\}$ & $\{x_{5d},y_{5d}\}$ & $\{x_{5c}\}$ & $\{z_1,z_2\}$ & $\{z_2\}$\\\hline
$f_1$ & 1 & 0 & 0 & 0 & 4 & 2 & 4 & 0 & 0\\
$f_2$ & 0 & 1 & 0 & 0 & 0 & 0 & 0 & 2 & 2\\
$w_1$ & 0 & 0 & 1 & 0 & 2 & 1 & 2 & 1 & 0\\
$w_2$ & 0 & 0 & 0 & 1 & 2 & 1 & 0 & 3 & 3
\end{tabular}
\end{center}

Matrix $C$ has the same dimension as matrix $A$. Let $c_{ij}$ with $i\in N$ and $j\in N\cup\overline{\mathcal{A}}^F$ denote the elements of matrix $C$. Each element $c_{ii}$ with $i\in N$ represents the ``utility" obtained by agent $i$ when $i$ is assigned with $\emptyset$. Each element $c_{ij}$ with $j\in \overline{\mathcal{A}}^F$ and $i\in N(j)$ represents the ``utility" obtained by agent $i$ in the assignment $j$ in the market with artificial externalities. In particular,

(i) let $c_{ii}=0$ for all $i\in N$; other elements of matrix $C$ are larger than 0;

(ii) for any firm $f\in F$ and any two of its acceptable assignments $Y,Z\in\overline{\mathcal{A}}^f$, $c_{fY}>c_{fZ}$ if $Y\succ_fZ$;

(iii) for any worker $w\in W$ and any $Y,Z\in\overline{\mathcal{A}}^F$ with $w\in W(Y)$ and $w\in W(Z)$, $c_{wY}>c_{wZ}$ if (a) $Y_w\succ_wZ_w$, or (b) $Y_w=Z_w$ and $Y\succ_fZ$ where $\{f\}=F(Y)=F(Z)$;

(iv) let all $c_{ij}$ with $i,j\in N$ and $i\neq j$ and all $c_{kY}$ with $Y\in \overline{\mathcal{A}}^F$ and $k\notin N(Y)$ be numbers larger than the largest number among those used in (ii) and (iii);

(v) in (ii), (iii), and (iv), we choose numbers such that $c_{ij}\neq c_{ik}$ for all $i\in N$, $j,k\in N\cup\overline{\mathcal{A}}^F$ with $j\neq k$ (i.e., the numbers in the same row are different), and $c_{ij}>c_{iY}$ for all $Y\in \overline{\mathcal{A}}^F$ and $i,j\in N$ with $i\neq j$ (i.e., the nondiagonal elements of the first $n$ columns are larger than the elements of the columns from $\overline{\mathcal{A}}^F$).

For instance, we construct matrix $C$ for the illustrative example below.
\begin{center}
\begin{tabular}
[c]{c|ccccccccc}
& $\;f_1\;$ & $\;f_2\;$ & $\;w_1\;$ & $\;w_2\;$ & $\{x_{5d},y_{4d}\}$ & $\{x_{5d},y_{5d}\}$ & $\{x_{5c}\}$ & $\{z_1,z_2\}$ & $\{z_2\}$\\\hline
$f_1$ & 0 & $L_2$ & $L_3$ & $L_4$ & 9 & 8 & 7 & $L_8$ & $L_9$\\
$f_2$ & $L_1$ & 0 & $L_3$ & $L_4$ & $L_5$ & $L_6$ & $L_7$ & $8$ & $6$\\
$w_1$ & $L_1$ & $L_2$ & 0 & $L_4$ & $6+\epsilon$ & 6 & 4 & 5 & $L_9$\\
$w_2$ & $L_1$ & $L_2$ & $L_3$ & 0 & 3 & 5 & $L_7$ & $8+\epsilon$ & 8
\end{tabular}
\end{center}

The $L$'s are arbitrary other than satisfying $L_1>L_2>\ldots>L_8>L_9>9$.
Notice that firm $f_1$ prefers $\{x_{5d},y_{4d}\}$ to $\{x_{5d},y_{5d}\}$, then the $\epsilon$ added in $c_{w_1\{x_{5d},y_{4d}\}}$ (as compared to $c_{w_1\{x_{5d},y_{5d}\}}$) refers to the artificial externality of worker $w_1$ being slightly better off when her employer $f_1$ becomes better off. We add an $\epsilon$ in $c_{w_2\{z_1,z_2\}}$ (as compared to $c_{w_2\{z_2\}}$) for a similar reason.

\begin{definition}\label{def_pivot}
\normalfont
\begin{description}
\item[(\romannumeral1)] Given a feasible basis of $\{\mathbf{b}|\mathbf{b}\geq\mathbf{0}, A\mathbf{b}=\pi_N\}$ and a column $j$ of matrix $A$ outside this basis, a \textbf{pivot step} produces a new feasible basis of $\{\mathbf{b}|\mathbf{b}\geq\mathbf{0}, A\mathbf{b}=\pi_N\}$ by replacing a column of the basis with the column $j$.
\item[(\romannumeral2)] An \textbf{ordinal basis} of the matrix $C$ consists of a set of $n$ columns $k_1,k_2,\ldots,k_n\in N\cup\overline{\mathcal{A}}^F$ such that if $u_i=\min(c_{ik_1},c_{ik_2},\ldots,c_{ik_n})$, then for every column $s\in N\cup\overline{\mathcal{A}}^F$, there is at least one $i$ with $u_i\geq c_{is}$.
\item[(\romannumeral3)] Given an ordinal basis of matrix $C$ and a column $j$ from this basis, an \textbf{ordinal pivot step} produces a new ordinal basis by replacing the column $j$ by a column outside the basis as follows: When the column $j$ has been removed, in the $n\times(n-1)$ matrix of remaining columns precisely one column will contain two row minimizers, one of which is new and the other a row minimizer for the original basis. Let the row associated with the latter have an index $i^*$. Examing all columns in matrix $C$ for which
    \begin{equation}\label{ordinal}
    c_{ik}>\min\{c_{is}|s\text{ remains in the basis}\}
    \end{equation}
    holds for all $i$ not equal to $i^*$. Of these columns, select the one which maximizes $c_{i^*k}$. Introduces this column into the $C$ basis.
\end{description}
\end{definition}

We assume the polytope $\{\mathbf{b}|\mathbf{b}\geq\mathbf{0}, A\mathbf{b}=\pi_N\}$ is nondegenrate, which means that for any basic feasible solution to the problem, all $n$ components associated with the corresponding basis are positive. We can deal with degenerate cases by perturbing the polytope.\footnote{See, for example, Chapter 3 of \cite{V14}.} According to Lemma 1 of \cite{S67}, under the nondegenracy assumption, there is a unique column to remove in a pivot step and therefore a unique way to implement a pivot step.

An ordinal basis for matrix $C$ has the following property: Each column contains exactly one row minimizer of the basis.\footnote{Otherwise, there is a column of the basis with no row minimizer, then each component of this column is greater than the corresponding row minimizer of the basis.} Consider a collection of columns of matrix $C$ with the largest $n$ components for some row $i\in N$. Such a collection is an ordinal basis for matrix $C$. For example, in the above matrix $C$, the columns $f_2,w_1,w_2,\{z_1,z_2\}$ have the largest four components for row $f_1$. These columns form an ordinal basis since each column outside the basis has a row-$f_1$-component of less than $L_8$. An ordinal basis of this kind consists of $n-1$ columns from $N$ and one column from $\overline{\mathcal{A}}^F$. Moreover, there is no column outside this basis that forms an ordinal basis with these $n-1$ columns from $N$.\footnote{For example, the columns $f_2,w_1$, and $w_2$ and a column other than $\{z_1,z_2\}$ does not form an ordinal basis since each of its row minimizers is smaller than the corresponding component of the column $\{z_1,z_2\}$.}

An ordinal pivot step can be implemented if, after removing the column $j$, the remaining $n-1$ columns are not all from $N$, since in this case column $i^*$ (the index $i^*$ is identified in Definition \ref{def_pivot} (iii)) is always a candidate outside the basis that satisfies (\ref{ordinal}).\footnote{If the remaining $n-1$ columns are not all from $N$, the column that has two row minimizers is not from $N$, and we then know that column $i^*$ is neither in the remaining $n-1$ columns nor the column removed.} Otherwise, if all the remaining $n-1$ columns are from $N$, the ordinal pivot step cannot be implemented since there is no qualified column to brought into the basis. The reader can then check that if an ordinal pivot step can be implemented, it indeed transforms an ordinal basis for matrix $C$ into another ordinal basis. We are now ready to introduce Scarf's algorithm for finding a stable $\pi$-schedule matching.

\bigskip
Step 0. We begin with the feasible basis for $A$ that consists of the first $n$ columns of $A$, and an ordinal basis for matrix $C$ that consists of columns with the largest $n$ components for some row. Let $A^0=N$ denote the set of column indexes of the $A$ basis, $C^0$ the set of column indexes of the $C$ basis, and $j^C_0$ the column index defined by $\{j^C_0\}\equiv C^0\setminus A^0$.

Step $s, s\geq1$. Implement a pivot step on $A^{s-1}$ by bringing column $j^C_{s-1}$ into the $A$ basis. Suppose column $j^A_{s-1}$ is removed from the $A$ basis in the pivot step, and let $A^s$ denote the set of column indexes of the new $A$ basis. If $A^s=C^{s-1}$, the algorithm terminates. Otherwise, implement an ordinal pivot step on $C^{s-1}$ by replacing column $j^A_{s-1}$ with a column outside $C^{s-1}$. Suppose column $j^C_s$ is brought into the $C$ basis in the ordinal pivot step, and let $C^s$ denote the set of column indexes of the new $C$ basis. If $C^s=A^s$, the algorithm terminates. Otherwise, proceed to the next step.

\bigskip

\begin{example}\label{exam_alg}
\normalfont
We show how to find a stable $\pi$-schedule matching in the illustrative example using the above algorithm.

Step 0. We choose the ordinal basis of $C$ with the largest four components for the row $f_1$. We have $A^0=\{f_1,f_2,w_1,w_2\}$, $C^0=\{\{z_1,z_2\},f_2,w_1,w_2\}$, and $j^C_0=\{z_1,z_2\}$.

Step 1. The pivot step on $A^0$ brings column $\{z_1,z_2\}$ and removes column $w_2$. We have $A^1=\{f_1,f_2,w_1,\{z_1,z_2\}\}\neq C^0$. The ordinal pivot step then removes column $w_2$ from the $C$ basis. In the remaining three columns, column $\{z_1,z_2\}$ has two row minimizers, with the old one in row $f_1$. We thus examine all columns $k$ with $c_{f_2k}>0, c_{w_1k}>0$, and $c_{w_2k}>8+\epsilon$ and select the one which maximizes $c_{f_1k}$. This is column $\{x_{5c}\}$. We have $C^1=\{\{x_{5c}\},f_2,w_1,\{z_1,z_2\}\}\neq A^1$.

Step 2. The pivot step on $A^1$ brings column $\{x_{5c}\}$ and removes column $w_1$. We have $A^2=\{f_1,f_2,\{x_{5c}\},\{z_1,z_2\}\}\neq C^1$. The ordinal pivot step then removes column $w_1$ from the $C$ basis. In the remaining three columns, column $\{x_{5c}\}$ has two row minimizers, with the old one in row $f_1$. We thus examine all columns $k$ with $c_{f_2k}>0, c_{w_1k}>4$, and $c_{w_2k}>8+\epsilon$ and select the one which maximizes $c_{f_1k}$. This is column $f_1$. We have $C^2=\{f_1,f_2,\{x_{5c}\},\{z_1,z_2\}\}=A^2$. The algorithm terminates.
\end{example}

Notice that $A^0$ and $C^0$ differ in only one element. For example, if we choose $C^0$ to be the collection of columns with the largest $n$ components for row $f_1$, then all columns from $A^0$ are in $C^0$ except for column $f_1$. The algorithm maintains this relation until column $f_1$ is removed from the $A$ basis at a pivot step or column $f_1$ is brought into the $C$ basis at an ordinal pivot step, and in either case the algorithm terminates.

A critical point is that the algorithm never cycles. To see this, suppose (i) the algorithm produces $(A^0,C^0)$ at some step $s\geq2$, then at the ordinal pivot step for $C^{s-1}$, after removing column $j^A_{s-1}$, all the remaining $n-1$ columns are from $N$,\footnote{This is because the $n-1$ columns of $A^0\cap C^0$ are from $N$.} and thus this ordinal pivot step cannot be implemented; (ii) the algorithm does not cycle until Step $s$ where Step $s$ produces $(A^s,C^s)=(A^k,C^k)$ with $s>k\geq1$, then Lemma 1 and Lemma 2 of \cite{S67} imply $(A^{s-1},C^{s-1})=(A^{k-1},C^{k-1})$.\footnote{Since both $C^{s-1}$ and $C^{k-1}$ can be obtained from $C^k$ by replacing the column from $C^k\setminus A^k$ with a column outside $C^k$ (note that and $A^k\neq C^k$), Lemma 2 of \cite{S67} implies $C^{s-1}=C^{k-1}$. Then we know that both $A^{s-1}$ and $A^{k-1}$ can be obtained from $A^k$ by replacing a column with the column from $C^{k-1}\setminus C^k$, thus, Lemma 1 of \cite{S67} implies $A^{s-1}=A^{k-1}$.} This contradicts the assumption that the algorithm does not cycle before Step $s$. Since the problem is finite, and the algorithm never cycles, the algorithm must terminate.

\begin{theorem}
\normalfont
Scarf's algorithm produces a stable $\pi$-schedule matching.
\end{theorem}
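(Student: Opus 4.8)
The plan is to exploit that, at termination, the common index set $B := A^{s} = C^{s}$ is simultaneously a feasible basis of $\{\mathbf{b}\geq\mathbf{0},\,A\mathbf{b}=\pi_N\}$ and an ordinal basis of $C$; the first fact yields a $\pi$-schedule matching and the second forces its stability. (That the algorithm halts has already been shown: it never cycles and the problem is finite.) Let $\mathbf{b}^{*}$ be the basic feasible solution attached to $B$ and set $\mathrm{t}(Y)=\mathrm{b}^{*}(Y)$ for each $Y\in\overline{\mathcal{A}}^F$. Since column $Y$ of $A$ is $\pi_Y$ and column $j\in N$ is $\mathrm{ind}(\{j\})$, row $i$ of $A\mathbf{b}^{*}=\pi_N$ reads $\mathrm{b}^{*}(i)+\sum_{Y}\pi_Y(i)\,\mathrm{t}(Y)=\pi_N(i)$ with $\mathrm{b}^{*}(i)\geq0$; hence $\sum_{Y}\pi_Y(i)\,\mathrm{t}(Y)\leq\pi_N(i)$ and $\mathbf{t}$ is a $\pi$-schedule matching. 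By the nondegeneracy assumption a variable is positive exactly when its column lies in $B$, so $B\cap\overline{\mathcal{A}}^F$ is the set of assignments with positive time share, while $B\cap N$ is the set of agents with positive vacant time, i.e.\ the agents that are \emph{not} full matched.

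The key step is to identify the row minimizers of the ordinal basis. I claim that for every $i\in N$, $u_i:=\min_{k\in B}c_{ik}=c_{i,\widetilde{\mathrm{t}}(i)}$, where I write $c_{i,\emptyset}:=c_{ii}=0$. If $i\in B$ (not full matched), the diagonal entry $c_{ii}=0$ belongs to the basis and is minimal among all entries, so $u_i=0=c_{i,\emptyset}$, consistent with $\widetilde{\mathrm{t}}(i)=\emptyset$. If $i\notin B$ (full matched), properties (iv)--(v) of $C$ ensure that every off-diagonal agent column and every assignment column $Y$ with $i\notin N(Y)$ carries a row-$i$ entry strictly above all genuine utilities; since fullness guarantees at least one assignment column $Y\in B$ with $i\in N(Y)$, the minimizer is attained on such a column. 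As properties (ii) and (iii) make $c_{i,\cdot}$ order-isomorphic to $\succ_f$ (for $i=f$) and to $\rhd_w$ (for $i=w$) over the relevant assignments, the minimizing column is exactly the $\succ_f$-worst (resp.\ $\rhd_w$-worst) positive-time assignment involving $i$, which is $\widetilde{\mathrm{t}}(i)$.

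Stability then follows. Suppose, for contradiction, that some $Z\in\overline{\mathcal{A}}^f$ blocks $\mathbf{t}$. The ordinal basis property furnishes an index $i$ with $u_i\geq c_{iZ}$. No $i\notin N(Z)$ can serve: by (iv)--(v), $c_{iZ}$ exceeds every genuine utility, hence exceeds $u_i=c_{i,\widetilde{\mathrm{t}}(i)}$. Therefore $i\in N(Z)$ and $c_{i,\widetilde{\mathrm{t}}(i)}\geq c_{iZ}$. By the row-distinctness in (v), either $\widetilde{\mathrm{t}}(i)=Z$, or $c_{i,\widetilde{\mathrm{t}}(i)}>c_{iZ}$, which by (ii) or (iii) gives $\widetilde{\mathrm{t}}(f)\succ_f Z$ when $i=f$ and $\widetilde{\mathrm{t}}(w)\rhd_w Z$ when $i=w\in W(Z)$. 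In each case the blocking requirement $Z\succ_f\widetilde{\mathrm{t}}(f)$, respectively $Z\rhd_w\widetilde{\mathrm{t}}(w)$, fails, contradicting that $Z$ blocks. Hence $\mathbf{t}$ admits no blocking assignment and is stable.

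I expect the middle step --- establishing $u_i=c_{i,\widetilde{\mathrm{t}}(i)}$ --- to be the main obstacle, since it requires using the ordering of the $L$-values in (iv) together with the strict row-distinctness in (v) to rule out that any spurious column (an agent slack, or an assignment not involving $i$) ever realizes the row minimum, and it requires correctly matching the full-matched / not-full-matched dichotomy to membership of $i$ in $B$ via nondegeneracy.
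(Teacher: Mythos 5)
Your proof is correct and follows essentially the same route as the paper: the terminating index set is simultaneously a feasible basis of $\{\mathbf{b}\,|\,\mathbf{b}\geq\mathbf{0},\,A\mathbf{b}=\pi_N\}$, which yields the $\pi$-schedule matching $\mathbf{t}$, and an ordinal basis of $C$, which forces stability. The paper gives only a two-line sketch of this; your middle step identifying the row minimizers $u_i$ with $c_{i,\widetilde{\mathrm{t}}(i)}$ (via nondegeneracy and properties (i)--(v) of $C$) is precisely the detail that sketch leaves implicit, and you carry it out correctly.
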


The algorithm terminates at a set of column indexes that corresponds to both a feasible basis for $\{\mathbf{b}|\mathbf{b}\geq\mathbf{0}, A\mathbf{b}=\pi_N\}$ and an ordinal basis for matrix $C$. Let $\mathbf{b}$ be the basic feasible solution for this feasible basis, then the vector $\mathbf{t}=(\mathrm{b}(Y))_{Y\in\overline{\mathcal{A}}^F}$ is a $\pi$-schedule matching.  Recall that each entity of matrix $C$ is the ``utility'' obtained by the row agent in the column situation in the market with artificial externalities. The definition of an ordinal basis for matrix $C$ implies that $\mathbf{t}$ is a stable $\pi$-schedule matching. In Example \ref{exam_alg}, the basic feasible solution of $\{\mathbf{b}|\mathbf{b}\geq\mathbf{0}, A\mathbf{b}=\pi_N\}$ for the feasible basis $A_2$ is $\mathbf{b}=(3,1,0,0,0,0,\frac{1}{2},1,0)$. Thus, we obtain a stable $\pi$-schedule matching $\mathbf{t}=(0,0,\frac{1}{2},1,0)$, which assigns time share $\frac{1}{2}$ to $\{x_{5c}\}$ and time share $1$ to $\{z_1,z_2\}$.

Recall that the market in the example is $\pi$-concave, we find that the matching $\{z_1,z_2\}$ dominates the above stable $\pi$-schedule matching $\mathbf{t}$. Therefore, the matching $\{z_1,z_2\}$ is stable.

\bigskip

\end{document}